\theoremstyle{thmstyleone}%
\newtheorem{theorem}{Theorem}
\newtheorem{proposition}[theorem]{Proposition}%
\newtheorem{lemma}[theorem]{Lemma}
\theoremstyle{thmstyletwo}%
\theoremstyle{thmstylethree}%
\newtheorem{definition}{Definition}%
\begin{document}

\title{Atom graph, partial Boolean algebra and quantum contextuality}

\author[1]{\fnm{Songyi} \sur{Liu}}\email{liusongyi@buaa.edu.cn}

\author*[1]{\fnm{Yongjun} \sur{Wang}}\email{wangyj@buaa.edu.cn}

\author[1]{\fnm{Baoshan} \sur{Wang}}\email{bwang@buaa.edu.cn}

\author[1]{\fnm{Jian} \sur{Yan}}\email{jianyanmath@buaa.edu.cn}

\author[1]{\fnm{Heng} \sur{Zhou}}\email{zhouheng@buaa.edu.cn}

\affil*[1]{\orgdiv{School of Mathematical Sciences}, \orgname{Beihang University}, \orgaddress{ \city{Beijing}, \postcode{100191}, \country{China}}}

\abstract{Partial Boolean algebra underlies the quantum logic as an important tool for quantum contextuality. We propose the notion atom graphs to reveal the graph structure of partial Boolean algebra for finite dimensional quantum systems by proving that (i) the partial Boolean algebras for quantum systems are determined by their atom graphs; (ii) the states on atom graphs can be extended uniquely to the partial Boolean algebras, and (iii) each exclusivity graph is an induced graph of an atom graph. (i) and (ii) show that the finite dimensional quantum systems are uniquely determined by their atom graphs. which proves the reasonability of graphs as the models of quantum experiments. (iii) establishes a connection between atom graphs and exclusivity graphs, and introduces a method to express the exclusivity experiments more precisely. We also present a general and parametric description for Kochen-Specker theorem based on graphs, which gives a type of non-contextuality inequality for KS contextuality.}

\keywords{Quantum contextuality, Partial Boolean algebra, Atom graphs, Kochen-Specker theorem}



\maketitle

\section*{Declarations}

\bmhead{Competing interests}
The authors have no relevant financial or non-financial interests to disclose.

\section{Introduction}

Quantum theory provides potential capabilities for information processing. The investigation of fundamental features of quantum systems has become a significant issue. All the non-classical features of quantum systems, such as non-locality \citep{Bell1964On,Popescu1994Quantum}, negativity \citep{Wootters1986The} and Kochen-Specker contextuality \citep{Kochen1967The}, can be generalized by quantum contextuality, which is divided into state-dependent contextuality and state-independent contextuality \citep{Adan2023Kochen}. It was shown that contextuality supplies a critical resource for quantum computation \citep{Mark2014Contextuality}.

Partial Boolean algebra is a powerful tool for quantum contextuality, which was used by Kochen and Specker (1967) to examine the problem of hidden variables in quantum mechanics \citep{Kochen1967The}, and has achieved great development for logic of quantum mechanics \citep{Isham1998Topos,Van2012Noncommutativity,Abramsky2020The}. A quantum system consists of a measurement scenario and a quantum state. The measurement scenario introduces contexts and the quantum state supplies super-classical probability distributions, which cause the contextuality together. A measurement scenario forms a partial Boolean algebra, and the quantum states are described by the probability distributions.

In this paper, the partial Boolean algebras are shown to be linked with the exclusivity graphs, which are utilized to depict quantum probabilities and non-contextuality inequalities (NC inequalities) \citep{Adan2010Contextuality,Adan2014Graph,Mark2014Contextuality}. We explore the features of partial Boolean algebras for quantum systems, and get some results. Firstly, we propose the atom graphs, and expose the graph structures of finite dimensional $epBA$, that is, partial Boolean algebra satisfying logical exclusivity principle (LEP). Therefore, a finite dimensional quantum system is uniquely determined by graph with probability distributions on them. Therefore, the utilization of graphs to be the models of quantum systems is proved reasonable. Secondly, we present a method to extend every exclusivity graph to an atom graph, which establishes a connection between partial Boolean algebra and exclusivity graphs. Finally, we introduce a general and parametric description for Kochen-Specker theorem based on graphs, which gives a type of NC inequality for KS contextuality.

In the next section 2, the concept of partial Boolean algebra is introduced. Section 3 defines atom graphs, and shows the graph structures of finite dimensional quantum systems with two theorems. In Section 4, it is proved that each finite graph is the induced subgraph of atom graph. Section 5 obtains a parametric description of KS contextuality. Finally, in Section 6, we summarize our work.

\section{Partial Boolean algebra}
\subsection{Basic concepts}
Partial Boolean algebra is generalization of Boolean algebra. Some concepts defined below are from \cite{Van2012Noncommutativity,Abramsky2020The}.

\begin{definition}[partial Boolean algebra]
    If $B$ is a set with
    \begin{itemize}
    \item a reflexive and symmetric binary relation $\odot\subseteq B\times B$,
    \item a (total) unary operation $\lnot:\ B\rightarrow B$,
    \item two (partial) binary operations $\land,\ \lor:\ \odot\rightarrow B$,
    \item elements $0,1 \in B$,
    \end{itemize}
    satisfying that for every subset $S\subseteq B$ such that $\forall a,\ b\in S,\ a\odot b$, there exists a Boolean subalgebra $C\subseteq B$ determined by $(C,\land,\lor,\lnot,0,1)$ and $S\subseteq C$, then $B$ is called a \textbf{partial Boolean algebra}, written by $(B,\odot)$, or $(B,\odot;\land,\lor,\lnot,0,1)$ for details.\par
    We use $pBA$ to denote the collection of all partial Boolean algebras.
\end{definition}

The abbreviation $pBA$ is adopted from \cite{Abramsky2020The}, where $pBA$ represents the category of partial Boolean algebras. The relation $\odot$ represents the compatibility. $a\odot b$ if and only if $a,b$ belong to a Boolean subalgebra. Therefore, a Boolean subalgebra of $B$ is called a context, and a maximal Boolean subalgebra is called a maximal context.\par

A partial Boolean algebra $B$ can be seen as overlapped Boolean algebras. More specifically, $B$ is a colimit of its total subalgebras in the category of partial Boolean algebras \citep{Van2012Noncommutativity}. For elements $a,b\in B$, we write $a\leq b$ to mean that $a\odot b$ and $a\land b=a$.

\begin{definition}
     Let $B\in pBA$, $a\in B$ and $a\neq 0$. $a$ is called an \textbf{atom} of $B$ if for each $x\in B$, $x\leq a$ implies $x=0$ or $x=a$. Use $A(B)$ to denote the atoms set of $B$.
\end{definition}

\begin{definition}
 Let $B\in pBA$. \par
 $B$ is said to be \textbf{atomic} if for each $x\in B$ and $x\neq 0$, there is $a\in A(B)$ such that $a\leq x$. \par
 $B$ is said to be \textbf{complete} if for each subset $S\subseteq B$ whose elements are pairwise compatible, $\bigvee S$ exists.\par
\end{definition}

We mainly concern finite systems. If an observable possesses infinite spectrum of eigenvalues, only a finite number of eigenspaces will be considered in practical fields such as quantum computation. Therefore, we define

\begin{definition}
If $B\in pBA$ and $B$ only contains finite Boolean subalgebras, then $B$ is said to be \textbf{finite dimensional}. Define $d(B):=\max\limits_{C}|A(C)|$ as the \textbf{dimension} of $B$  where $C$ is the Boolean subalgebras of $B$ and $|A(C)|$ is the size of $A(C)$.
\end{definition}

Each finite Boolean algebra is atomic and complete, so each finite dimensional partial Boolean algebra is also atomic and complete.

Abramsky et al. extends the exclusivity principle from quantum states to partial Boolean algebras to get closer to a quantum-realisable model\citep{Abramsky2020The}. The relevant definition is shown below.

\begin{definition}
 Let $B\in pBA$. \par
 $a,b\in B$ are said to be exclusive, written $a\bot b$, if there exists an element $c\in B$ such that $a\leq c$ and $b\leq\neg c$. \par
 $B$ is said to satisfy \textbf{Logical Exclusivity Principle (LEP)} or to be \textbf{exclusive} if $\bot\subseteq\odot$.\par
 Use $epBA$ to denote the collection of exclusive partial Boolean algebras, $acepBA$ to denote the atomic, complete and exclusive partial Boolean algebras.
\end{definition}

The abbreviation  $epBA$ is from \cite{Abramsky2020The}.

\begin{definition}
Let $B$ be a partial Boolean algebra. A Boolean subalgebra $C\subseteq B$ is called a \textbf{maximal Boolean subalgebra} of $B$ if for each Boolean subalgebra $D\subseteq B$,  $D\supseteq C$ implies $D=C$.
\end{definition}

\begin{definition}
If $B\in pBA$,  then a \textbf{state} on $B$ is defined by a map $p:B\rightarrow[0,\ 1]$ such that
 \begin{itemize}
    \item $p(0)=0$.
    \item $p(\neg x)=1-p(x)$.
    \item for all $x,y\in B$ with $x\odot y$,$\ p(x\lor y)+p(x\land y)=p(x)+p(y)$.
 \end{itemize}
A state is called a \textbf{0-1 state} if its range is $\{0,1\}$. Use $s(B)$ to denote the states set on $B$.\label{def_state}
\end{definition}

States are utilized to depict probability distributions of systems. A 0-1 state is a homomorphism from a partial Boolean algebra to \{0,1\}, that is, a truth-values assignment.

\subsection{Quantum system}
In this subsection, we show how to describe quantum systems using partial Boolean algebras.

Quantum logic was proposed by Birkhoff and Von Neumann (1936) to describe the property deduction in quantum physics \citep{Birkhoff1936The}. Quantum states are depicted by a Hilbert space $\mathcal{H}$. A proposition like $\hat{A}\in\Delta$ is depicted by a projector $\hat{P}$ on $\mathcal{H}$, where $\hat{A}$ is a bounded self-adjoint operator on $\mathcal{H}$ representing a physical quantity, and $\Delta$ is a Borel set of $\mathbb{R}$. Therefore, properties in a quantum system compose a set of projectors $\mathcal{P}(\mathcal{H})$. If $\hat{P}_1,\ \hat{P}_2$ are projectors onto closed linear subspaces $S_1,\ S_2$, $\hat{P}_1\land\hat{P}_2$ is defined to be the projector onto $S_1\cap S_2$, and $\lnot\hat{P}_1$ is defined to be the projector onto $S_1^{\bot}$. Then $\hat{P}_1\lor\hat{P}_2=\lnot(\lnot\hat{P}_1\land\lnot\hat{P}_2)$. One can prove that $\mathcal{P}(\mathcal{H})$ is an orthocomplemented modular lattice, called property lattice or standard quantum logic.

Property lattice $\mathcal{P}(\mathcal{H})$ has several disadvantages such as not satisfying the distributive law \citep{Doering2010Topos}. In research of contextuality, partial Boolean algebra performs better than orthocomplemented modular lattice. Therefore, we let $\mathcal{P}(\mathcal{H})$ be a partial Boolean algebra, which means operations between the noncommutative projectors are not allowed.

For details, all the projectors on $\mathcal{H}$ constitute the set $\mathcal{P}(\mathcal{H})$. Define binary relation $\hat{P}_1\odot\hat{P}_2$ by $\hat{P}_1\hat{P}_2=\hat{P}_2\hat{P}_1$. $\hat{P}_1\land\hat{P}_2$ is defined to be $\hat{P}_1\hat{P}_2$ only if $\hat{P}_1\odot\hat{P}_2$,. Definition of $\lnot\hat{P}_1$ is unchanged. Then we have $\hat{P}_1\lor\hat{P}_2=\lnot(\lnot\hat{P}_1\land\lnot\hat{P}_2)=\hat{P}_1+\hat{P}_2$. Because pairwise commeasurable projectors generate a Boolean algebra, $\mathcal{P}(\mathcal{H})=(\mathcal{P}(\mathcal{H}),\ \odot;\ \land,\ \lor,\ \lnot,\ \hat{0},\ \hat{1})$ is a partial Boolean algebra, where $\hat{0}$ is the zero projector, and $\hat{1}$ is the projector onto $\mathcal{H}$.

 We don't need to consider all the observables, that is, bounded self-adjoint operators on $\mathcal{H}$. In that case, we will get a partial algebra rather than a partial Boolean algebra. Because each bounded self-adjoint operator has spectral decomposition, and all the propositions about observables can be described by projectors, $\mathcal{P}(\mathcal{H})$ is powerful enough for us.

Easy to see $\mathcal{P}(\mathcal{H})$ is atomic and complete. The atoms of $\mathcal{P}(\mathcal{H})$ are the total rank-1 projectors. And each finite quantum system, that is, finite partial Boolean subalgebra of $\mathcal{P}(\mathcal{H})$, is naturally atomic and complete.

Consider four projectors on a qubit (2-dimensional Hilbert space), $\hat{P}_0=|0\rangle\langle0|,\ \hat{P}_1=|1\rangle\langle1|,\ \hat{P}_+=|+\rangle\langle+|,\ \hat{P}_-=|-\rangle\langle-|$, which generate the partial Boolean algebra in Fig.\ref{fig1}.

 \begin{figure}[h]
    \centering
    \includegraphics[width=0.4\linewidth]{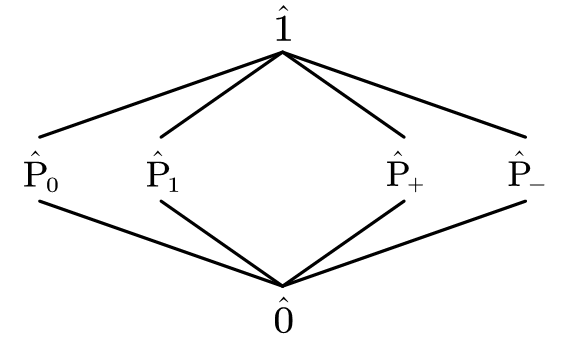}
    \caption{Partial Boolean algebra generated by $\hat{P}_0$, $\hat{P}_1$, $\hat{P}_+$, $\hat{P}_-$ ($\hat{P}_0\odot\hat{P}_1$, $\hat{P}_+\odot\hat{P}_-$).}\label{fig1}
 \end{figure}

We can also draw ``overlapped" partial Boolean algebras. Five rank-1 projectors on 3-dimensional Hilbert space as Fig.\ref{fig2} generate the partial Boolean algebra shown in Fig.\ref{fig3}.

\begin{figure}[h]
    \centering
    \includegraphics[width=0.35\linewidth]{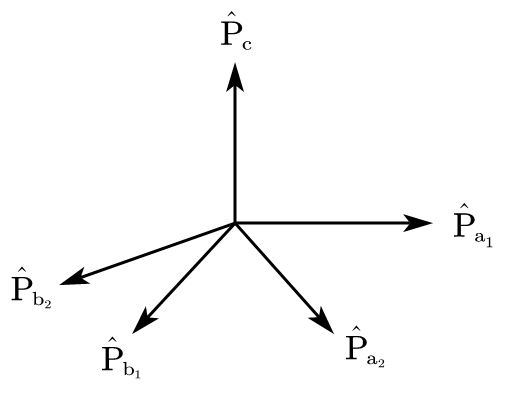}
    \caption{Five 3-dimensional rank-1 projectors. $\hat{P}_c,\hat{P}_{a_1},\hat{P}_{b_1}$ are pairwise orthogonal and $\hat{P}_c, \hat{P}_{a_2},\hat{P}_{b_2}$ are pairwise orthogonal}\label{fig2}
 \end{figure}

 \begin{figure}[h]
    \centering
    \includegraphics[width=0.5\linewidth]{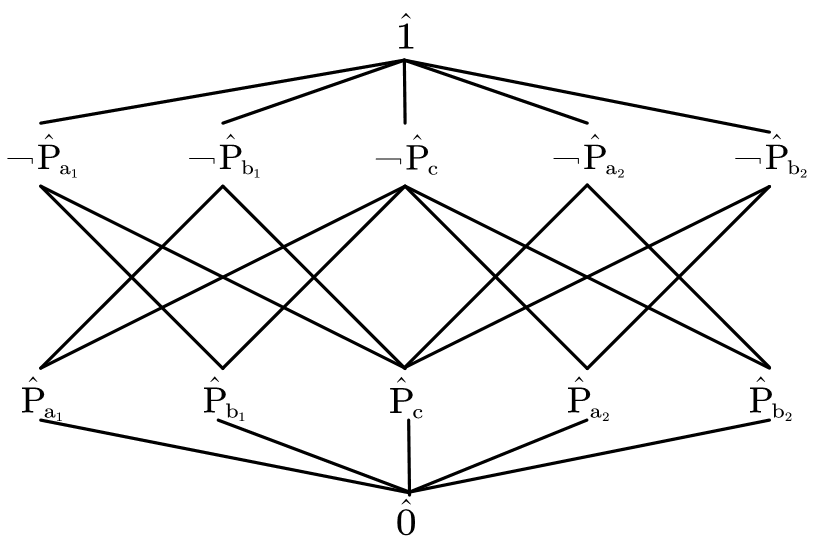}
    \caption{Partial Boolean algebra generated by $\hat{P}_c$, $\hat{P}_{a_1}$, $\hat{P}_{b_1}$, $\hat{P}_{a_2}$,$\hat{P}_{b_2}$}\label{fig3}
 \end{figure}

 Every experiment of quantum physics chooses a finite partial subalgebra of $\mathcal{P}(\mathcal{H})$ as its measurement scenario. The measurement scenario of CHSH experiment for Bell inequality is a partial Boolean algebra with 16 atoms \citep{Clauser1969Proposed} (4 observables introduce 16 elementary events), and the KCBS experiment for NC inequality is generated by 5 atoms \citep{Alexander2008Simple}.

 If two projectors $\hat{P}_1,\ \hat{P}_2$ are exclusive, which means there is a projector $\hat{P}$ such that $\hat{P}_1\leq\hat{P}$ and $\hat{P}_2\leq\lnot\hat{P}$, then $\hat{P}_1,\ \hat{P}_2$ are orthogonal, so they are commutative. Therefore, $\mathcal{P}(\mathcal{H})\in epBA$.

 To sum up, $\mathcal{P}(\mathcal{H})\in epBA$. Easy to see that each partial Boolean subalgebra of $\mathcal{P}(\mathcal{H})$ is also exclusive. Any ``quantum system" on $\mathcal{H}$ can be treated as partial Boolean subalgebra of $P(\mathcal{H})$.

 \begin{definition}
A \textbf{quantum system} is defined by a partial Boolean subalgebra of $P(\mathcal{H})$ for some Hilbert space $\mathcal{H}$. We use $QS$ to denote the collection of all quantum systems.
\end{definition}

We have $QS\subseteq epBA\subseteq pBA$. The axiomatization of quantum systems may need more extra properties (such as \cite{Popescu1994Quantum}).

\section{Atom graph}\label{sec3}
In this section, we define the atom graphs and prove several theorems which expose the graph structures of quantum systems. Unless otherwise specified, the graphs in this paper are simple and undirected.

\subsection{Graph Structure Theorem of acepBA}

If $B$ is an atomic and complete Boolean algebra, then $B$ is determined by its set of atoms, in other words, $B$ is isomorphic to the algebra of the power set of atoms. We generalize the conclusion to $acepBA$.

\begin{definition}
If $B\in pBA$, the \textbf{atom graph} of $B$, written $AG(B)$, is defined by a graph with vertex set $A(B)$ such that $a_1,a_2\in A(B)$ are adjacent iff $a_1\odot a_2$ and $a_1\neq a_2$ .
\end{definition}

For example, the atom graph of partial Boolean algebra in Fig.\ref{fig3} is shown in Fig.\ref{fig4}.

\begin{figure}[H]
    \centering
    \includegraphics[width=0.3\linewidth]{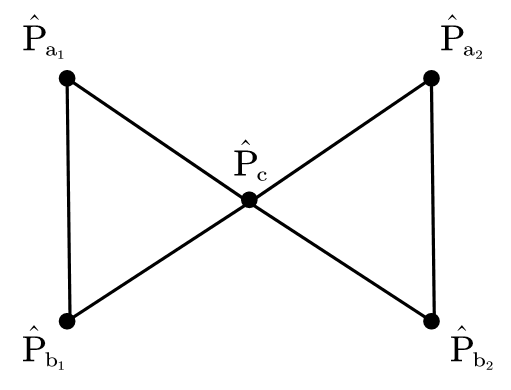}
    \caption{The atom graph of partial Boolean algebra in Fig.\ref{fig3}}\label{fig4}
 \end{figure}

Now we prove that the structure of an $acepBA$ is uniquely determined by its atom graph.

\begin{theorem}\label{thm1}
 If $B_1,B_2\in acepBA$. then $B_1\cong B_2$ iff $AG(B_1)\cong AG(B_2)$.
\end{theorem}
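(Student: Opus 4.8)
My plan is to treat the two implications separately, with essentially all of the content in the converse. Throughout I write $A(x):=\{a\in A(B):a\le x\}$ and let $N(a)$ denote the neighbourhood of an atom $a$ in $AG(B)$.

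The forward direction is routine. An isomorphism $\varphi:B_1\to B_2$ preserves $\odot$ (in both directions), $\neg$, $0$, $1$, and the partial operations, hence preserves the order, since $a\le b \iff a\odot b$ and $a\land b=a$. Because an atom is characterised purely order-theoretically, $\varphi$ restricts to a bijection $A(B_1)\to A(B_2)$, and as it both preserves and reflects $\odot$ it carries adjacent atoms to adjacent atoms and non-adjacent to non-adjacent. Thus $\varphi$ induces a graph isomorphism $AG(B_1)\cong AG(B_2)$.

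For the converse I must reconstruct an algebra isomorphism from a graph isomorphism $\phi:AG(B_1)\to AG(B_2)$, and I would first assemble three structural facts about any $B\in acepBA$. First, using LEP, the atoms of a maximal context $C$ are exactly the atoms of $B$ lying in $C$: if $a$ is an atom of $C$ and $x\le a$ is an atom of $B$, then $a\bot a'$ for every other atom $a'$ of $C$, hence $x\bot a'$, so by LEP $x\odot a'$; then $\{x\}\cup A(C)$ is pairwise compatible and generates a context containing $C$, forcing $x\in C$ and $x=a$. Second, the same exclusivity computation shows maximal cliques of $AG(B)$ correspond bijectively to maximal contexts (the clique $K$ giving the Boolean algebra it generates; maximality of $K$ forces $\bigvee K=1$, since otherwise atomicity produces a further atom below $\neg\bigvee K$ adjacent to all of $K$). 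Third, for a clique $\{a_i\}$ with join $u=\bigvee_i a_i$, its common neighbours are precisely $A(\neg u)$.

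The linchpin, and the step I expect to be the main obstacle, is a purely graph-theoretic description of the order:
\[
a\le \textstyle\bigvee_i a_i \iff A(\neg u)\subseteq N(a),
\]
that is, $a$ lies below the join of a clique iff every common neighbour of that clique is adjacent to $a$. The forward implication is the exclusivity argument: an atom below $u$ is exclusive to, hence (by LEP) adjacent to, every atom below $\neg u$. The reverse implication is delicate and I would argue its contrapositive. If $a\not\le u$ but $a\odot u$, then $a\le\neg u$, so $a$ is itself a common neighbour of $\{a_i\}$ that is not adjacent to itself, and the right-hand side fails. If instead $a\not\odot u$, I would invoke the partial-Boolean-algebra axiom in the form ``compatibility with each member of a clique propagates to their join'': were $a$ compatible with every atom below $\neg u$, then taking a maximal context $D\ni\neg u$ and its atoms below $\neg u$, the set $\{a\}$ together with those atoms would be pairwise compatible and generate a common context containing $\neg u$, giving $a\odot\neg u$ and so $a\odot u$, a contradiction; hence some atom below $\neg u$ is non-adjacent to $a$, again breaking the right-hand side.

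Granting this characterisation, two consequences finish the proof. The map $x\mapsto A(x)$ is injective: from the characterisation $A(\neg x)=\{a:A(x)\subseteq N(a)\}$, so $A(x)=A(y)$ yields $A(\neg x)=A(\neg y)$; then in a maximal context $C\ni x$ every atom of $C$ lies in $A(x)$ or $A(\neg x)$ and is therefore compatible with $y$, which forces $y\in C$ and hence $y=\bigvee(A(y)\cap A(C))=x$. Moreover the characterisation is transported verbatim by $\phi$, so for a maximal context $C$ of $B_1$ one computes $A_{B_2}\big(\bigvee\phi(A(C)\cap A(x))\big)=\phi(A(x))$, independently of $C$. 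Defining $\Phi(x):=\bigvee\phi(A(C)\cap A(x))$ for any maximal context $C\ni x$ is thus well defined (by injectivity in $B_2$); and since every compatible pair and every partial operation is evaluated inside a single maximal context, on which $\Phi$ restricts to the Boolean isomorphism induced by $\phi$, the map $\Phi$ is a partial Boolean algebra isomorphism, with inverse built symmetrically from $\phi^{-1}$. The crux throughout is the graph-theoretic order characterisation, which is exactly where LEP and the compatibility axiom do the essential work.
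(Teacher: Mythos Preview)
Your argument is correct and complete (under the same implicit finite-context assumptions the paper uses), but it proceeds by a genuinely different key lemma than the paper's. The paper's linchpin is a \emph{direct criterion for equality of joins}: writing $b=\bigvee A_1=\bigvee A_2$ with $A_i\subseteq A(C_i)$ for maximal contexts $C_i$ and $A'_i:=A(C_i)\setminus A_i$, one has $\bigvee A_1=\bigvee A_2$ iff both ``cross-complements'' $A'_1\cup A_2$ and $A_1\cup A'_2$ are atom sets of maximal contexts, i.e.\ maximal cliques of $AG(B)$. This is a purely graph-theoretic test for when two clique-joins coincide, and it transfers immediately through the graph isomorphism to make the map $\bigvee A\mapsto\bigvee g(A)$ well defined; the homomorphism check is then done by hand. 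Your route instead recovers the \emph{order} from the graph via the characterisation $a\le\bigvee K\iff(\text{common neighbours of }K)\subseteq N(a)$, deduces that $x\mapsto A(x)$ is injective, and then transports $A(x)$ (hence $x$) through $\phi$. The paper's criterion is more symmetric and gets well-definedness in one stroke; your approach extracts more structure---a graph-theoretic description of the full partial order, not just of equality of joins---which makes the isomorphism verification essentially automatic and would adapt more readily to variants of the statement. Both arguments use LEP at exactly the same pressure point (turning exclusivity of atoms into compatibility), and both rely on the pBA axiom to pass from pairwise compatibility with a clique to compatibility with its join.
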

\begin{proof}
If $B_1\cong B_2$, $AG(B_1)\cong AG(B_2)$ obviously from the relevant definitions.\par

Conversely, if $g:A(B_1)\to A(B_2)$ is an isomorphism between $AG(B_1)$ and $AG(B_2)$, in other words, $a_1,a_2$ are adjacent iff $g(a_1),g(a_2)$ are adjacent, we define a map from $B_1$ to $B_2$ as follow.
\begin{equation}
\begin{split}
f:B_1&\to B_2\\
0 &\mapsto 0\\
b=\bigvee A_1&\mapsto\bigvee g(A_1).\ (A_1\subseteq A(B_1))
\nonumber
\end{split}
\end{equation}

To finish the proof, the lemma below is necessary.

If $B\in acepBA$ and $C\subseteq B$ is a maximal Boolean subalgebra, it is easy to proved that $A(C)\subseteq A(B)$. Furthermore, if $C_1,C_2$ are two maximal Boolean subalgebras of $B$, $A_1\subseteq A(C_1)$, $A_2\subseteq A(C_2)$, $A'_1=A(C_1)-A_1$ and $A'_2=A(C_2)-A_2$, then $\bigvee A_1=\bigvee A_2$ iff $A'_1\cup A_2=A(D_1)$ and $A_1\cup A'_2=A(D_2)$, where $D_1,D_2$ are maximal Boolean subalgebras of $B$.

Necessity: If $\bigvee A_1=\bigvee A_2=b\in B$, then $\bigvee A'_1=\bigvee A'_2=\neg b$. For any $a'_1\in A'_1$, $a'_1\leq\neg b$, and for any $a_2\in A_2$, $a_2\leq b$, so $a'_1\bot a_2$. Since $B$ is exclusive, $a'_1\odot a_2$. Therefore $A'_1\cup A_2$ is contained in a maximal Boolean subalgebra $D_1$, and $A'_1\cup A_2\subseteq A(D_1)$ due to $A'_1\cup A_2$ are atoms of $B$. Because $\bigvee (A'_1\cup A_2)=1$, $A'_1\cup A_2=A(D_1)$. Identically, $A_1\cup A'_2=A(D_2)$.\par

Sufficiency: If $A'_1\cup A_2=A(D_1)$ and $A_1\cup A'_2=A(D_2)$, we firstly prove that $A'_1\cap A_2=\emptyset$. Suppose there exists $a\in A'_1\cap A_2$, then $a\notin A_1$ and $a\notin A'_2$. On the other side, for all $a_1\in A_1$ and $a'_2\in A'_2$, $a\odot a_1$ and $a\odot a'_2$. Therefore $\{a\}\cup A_1 \cup A'_2$ is contained in a Boolean subalgebra, which contradicts that $D_2$ is maximal, so $A'_1\cap A_2=\emptyset$. Therefore $\bigvee A_1=\neg(\bigvee A'_1)=\bigvee A_2$.

Now we can prove that the map $f:B_1\to B_2$ is an isomorphism. The $C,C_1$ and $C_2$ below are all maximal Boolean subalgebras of $B_1$.

$f(b)=\bigvee g(A_1)$ exists because $B_2$ is complete. Suppose $A_1\subseteq A(C_1)$ and $A_1\subseteq A(C_2)$. If $b=\bigvee A_1=\bigvee A_2$, then $A'_1\cup A_2=A(D_1)$ and $A_1\cup A'_2=A(D_2)$ due to the proportion $1$ in the lemma above, so $A'_1\cup A_2$ and $A_1\cup A'_2$ are both maximal cliques of $AG(B_1)$. Suppose that $\bigvee g(A_1)\neq\bigvee g(A_2)$. Because of the proportion $2$ in the lemma above, one of $g(A'_1\cup A_2)$ and $g(A_1\cup A'_2)$ is not a maximal clique of $AG(B_2)$, which contradicts that $g$ is an isomorphism between $AG(B_1)$ and $AG(B_2)$. Therefore, $f$ is well-defined. \par

If $b_1=\bigvee A^1_{1}\in B_1$, $b_2=\bigvee A^2_{1}\in B_1$ and $b_1\neq b_2$, then $f(b_1)=\bigvee g(A^1_1)$, $f(b_2)=\bigvee g(A^2_1)$. Similarly, $f(b_1)\neq f(b_2)$ because of the lemma above and the isomorphism $g$, so $f$ is injective. For any $b=\bigvee A\in B_2$, $f(\bigvee g^{-1}(A))=b$, so $f$ is surjective. Therefore, $f$ is a bijection.\par

Finally, $f(0)=0$. For $b=\bigvee A\in B_1$, suppose that $A\subseteq C$. $f(\neg b)=f(\neg(\bigvee A))=f(\bigvee A')=\bigvee g(A')=\neg\bigvee g(A)=\neg f(b)$. If $b_1, b_2\in B_1$ and $b_1\odot b_2$, then let $b_1=\bigvee A_1,b_2=\bigvee A_2\in C$, so $A_1,A_2\subseteq A(C)$, $f(b_1),f(b_2)\in f(C)$ and $f(b_1)\odot f(b_2)$. Furthermore, $f(b_1\lor b_2)=f(\bigvee A_1\lor\bigvee A_2)=f(\bigvee(A_1\cup A_2))=\bigvee g(A_1\cup A_2)=\bigvee g(A_1)\lor\bigvee g(A_2)=f(b_1)\lor f(b_2)$. Therefore, $f$ is a homomorphism.\par

In conclusion, $f$ is an isomorphism between $B_1$ and $B_2$.
\end{proof}

Therefore, if $Q_1,Q_2$ are atomic and complete $QS$, then $Q_1\cong Q_2$ iff $AG(Q_1)\cong AG(Q_2)$, which exposes the graph structure of measurement scenarios of atomic and complete quantum systems.

In 2020, Abramsky and Barbosa proposed a tool to extend the compatibility relation of a partial Boolean algebra
\citep{Abramsky2020The}, that is, $B\rightarrow B[\circledcirc]$ ($\circledcirc$ is a binary relation of $B$). Theorem \ref{thm1} implies that, for an $acepBA$, the extension of compatibility relation is equivalent to the increasing of edges of its atom graph.

\subsection{Extension Theorem of the states on atom graphs}

We have defined states on partial Boolean algebras. For graphs, we have definition below.

\begin{definition}
If $G$ is a simple graph whose cliques have finite sizes, a \textbf{state} on $G$ is defined by a map $p:V(G)\rightarrow[0,\ 1]$ such that for each maximal clique $C$ of $G$, $\sum_{v\in C}p(v)=1$. Use $s(G)$ to denote the states set on $G$.
\end{definition}

Abramsky and Barbosa pointed out that there is a one-to-one correspondence between the states on a finite Boolean algebra and the probability distributions on the atoms \citep{Abramsky2020The}. We generalize the conclusion to finite dimensional $epBA$.

\begin{theorem}\label{thm2}
If $B$ is a finite dimensional $epBA$, then $s(B)\cong s(AG(B))$.
\end{theorem}
\begin{proof}
If $p\in s(B)$, obviously $p|_{A(B)}\in s(AG(B))$. Define $f:s(B)\to s(AG(B))$ as the restriction map below. We prove that $f$ is a bijection.

\begin{equation}
\begin{split}
f:s(B)&\to s(AG(B))\\
p&\mapsto p|_{A(B)}
\nonumber
\end{split}
\end{equation}

If $p_1,p_2\in s(B)$ and $p_1\neq p_2$, then $f(p_1)\neq f(p_2)$. Otherwise, suppose that $f(p_1)=f(p_2)=p_1|_{A(B)}=p_2|_{A(B)}$. For any $b\in B$, let $b\in C$ where C is a maximal Boolean subalgebra of $B$. Then $b=\bigvee A$, $A\subseteq A(C)\subseteq A(B)$ due to the lemma in the proof of theorem \ref{thm1}. Thus $p_1(b)=p_1(\bigvee A)=\sum_{a\in A}p_1(a)=\sum_{a\in A}p_1|_{A(B)}(a)=\sum_{a\in A}p_2|_{A(B)}(a)=\sum_{a\in A}p_2(a)=p_2(\bigvee A)=p_2(b)$, so $p_1=p_2$, which induces a contradiction. Therefore $f$ is injective.\par

If $p'\in s(AG(B))$, then define $p:B\to[0,1]$ as below.
\begin{equation}
\begin{split}
p:B&\to [0,1]\\
0&\mapsto 0\\
b=\bigvee A&\mapsto\sum_{a\in A}p'(a),\ (A\subseteq A(B))
\nonumber
\end{split}
\end{equation}

Then $p|_{A(B)}=p'$, we prove that $p\in s(B)$. $p(b)=\sum_{a\in A}p'(a)\in[0,1]$ because $p'\in s(AG(B))$ and $A$ is contained in a maximal Boolean subalgebra. If $b=\bigvee A_1=\bigvee A_2$, then $A'_1\cup A_2=A(D_1)$ and $A'_1\cap A_2=\emptyset$ due to the proof of lemma in the proof of theorem \ref{thm1}, so $p(b)=\sum_{a\in A_1}p'(a)=1-\sum_{a\in A'_1}p'(a)=\sum_{a\in A_2}p'(a)$. Therefore, $p$ is well-defined.\par

We have $p(0)=0$. If $b=\bigvee A$, $p(\neg b)=p(\bigvee A')=\sum_{a\in A'}p'(a)=1-\sum_{a\in A}p'(a)=1-p(b)$. If $x,y\in B$ and $x\odot y$, then $x,\ y$ are in the same maximal Boolean subalgebra $C$. Let $x=\bigvee A_x,\ y=\bigvee A_y$ where $A_x, A_y\subseteq A(C)$. We have $p(x\lor y)+p(x\land y)=\sum_{a\in A_x\cup A_y}p'(a)+\sum_{a\in A_x\cap A_y}p'(a)=\sum_{a\in A_x}p'(a)+\sum_{a\in A_y}p'(a)=p(x)+p(y)$, so $p\in s(B)$. Therefore, $f(p)=p'$. $f$ is surjective.\par

In conclusion, $f$ is a bijection between $s(B)$ and $s(AG(B))$.
\end{proof}

The theorems \ref{thm2} shows the one-to-one correspondence between states on an finite dimensional $epBA$ and states on its atom graph. If $Q\in QS$, then a quantum state $\rho$ induces a map $\rho: Q\to[0,1]$, $\rho(\hat{P})=tr(\rho\hat{P})$. It is easy to prove that $\rho$ is a state on $Q$. Let $qs(Q)$ denote the states on $Q$ induced by quantum states. We have $qs(Q)\subseteq s(Q)\cong s(AG(Q))$ for finite dimensional $QS$.

Theorems \ref{thm1} and \ref{thm2} expose the graph structure of finite dimensional $epBA$, thus prove the reasonability of graphs to be the models of finite dimensional quantum systems. Firstly, the measurement scenario is determined by atom graph. And then, the quantum states are determined by the states on atom graph.

\section{Extension of graphs to atom graphs}

The section below explains how the atom graphs are connected to the exclusivity graphs.

At first, we introduce concept of exclusivity graphs, which is the application of graph theory for quantum contextuality.

\subsection{Exclusivity graph}
Exclusivity graphs \citep{Adan2010Contextuality,Adan2014Graph} are the tools utilized to describe the exclusive events. It is based upon the mathematical works of Lov\'{a}sz et al. \citep{Lovasz1986Relaxations}.

Let $G$ be a finite graph. The vertexes of $G$ are marked as $1, 2, ..., n$. A vector $x:\ V(G)\rightarrow \{0,\ 1\}$ in $\{0\ ,1\}^n$ is said to be the incidence vector of vertex set $x^{-1}(1)\subseteq V(G)$.

Notation $\alpha(G;w)$ denotes the maximum weight of the independent sets of $G$. A weight is a vector $w:\ V(G)\rightarrow\mathbb{R}^+$. Thus $\alpha(G;\vec{1})$ is the maximum independent number of $G$, also written $\alpha(G)$.

 Let $VP(G)$ (vertex packing polytope) indicate the convex hull of incidence vectors of all the independent sets of vertexes. $VP(G)$ was employed in the calculation of $\alpha(G;w)$, because $\alpha(G;w)$ is the maximum of the linear function $w^Tx$ for $x\in VP(G)$. Moreover, $VP(G)$ consists of all of the ``classical probabilities" from the perspective of exclusivity graphs.

A well-known example is the KCBS experiment \citep{Alexander2008Simple}, which includes five rank-1 projectors $\hat{P}_0,\ \hat{P}_1,\ \hat{P}_2,\ \hat{P}_3,\ \hat{P}_4$ in 3-dimensional Hilbert space such that $\hat{P}_i$ and $\hat{P}_{i+1}$  (with the sum modulo 5) are orthogonal, that is, exclusive. The exclusivity relation of $\hat{P}_i\ (i=0,\ 1,\ 2,\ 3,\ 4)$ is shown by Fig.\ref{fig8}.
\begin{figure}[H]
    \centering
    \includegraphics[width=0.3\linewidth]{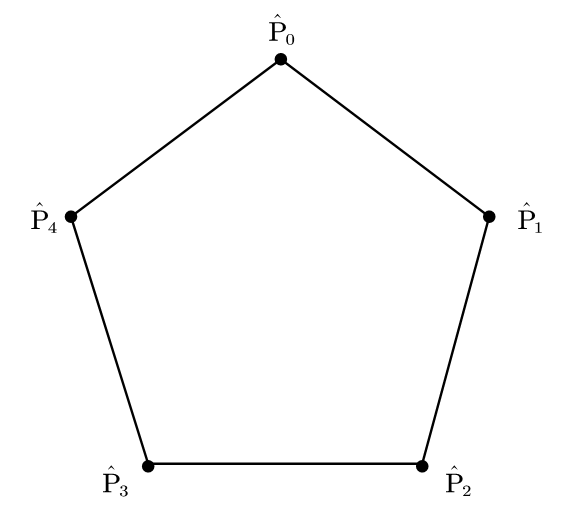}
    \caption{: Exclusivity graph $G$ for KCBS experiment}\label{fig8}
 \end{figure}

 If these five vertexes are ``classical events", that is, the events in classical probability theory (sets or indicator functions), for the graph $G$, $VP(G)$ is the set of classical probability vectors of five events. In that case, polynomial $\sum_{i=0}^4p(P_i)$ satisfies the inequality below.

$$\sum_{i=0}^4p(\hat{P}_i)\leq\alpha(G)=2$$

It is called the KCBS inequality, the earliest NC inequality \citep{Alexander2008Simple}.

However, in quantum case, these five events are ``quantum events". If the quantum state is $\rho$, then the probabilities of event $P_i$ is $\langle\hat{P}_i\rangle=Tr(\rho\hat{P}_i)$. A notable interpretation of $G$ in quantum systems was found by Cabello et al. \citep{Adan2010Contextuality}. It violates the KCBS inequality,

$$\sum_{i=0}^4p(\hat{P}_i)=\sum_{i=0}^4 Tr(\rho\hat{P}_i)=\sqrt{5}>2$$

which provides an evidence of quantum contextuality.

An exclusivity graph isn't necessarily an atom graph, but we can prove that every finite graph is the induced graph of an atom graph, which leads to the connection between finite dimensional $epBA$ and exclusivity graphs.

\subsection{Faithful and linearly independent orthogonal co-representation }
To achieve our final conclusion, it's necessary to introduce the notion orthogonal co-representation.

Gr\"{o}tschel, Lov\'{a}sz, and Schrijver (1986) defined the Orthonormal Representation (OR) of graph $G$, which can be seen as an interpretation of $G$ to quantum systems \citep{Lovasz1986Relaxations}.

\begin{definition}
Let $G$ be a graph. An \textbf{OR} of $G$ is a map $v:V(G)\rightarrow\mathbb{R}^d$ $(d\in\mathbb{Z}^+)$ such that $\parallel v(i)\parallel=1(i\in V(G))$, and if $i,\ j$ are not adjacent, then $v(i)\bot v(j)$.

Let $v_i$ denote vector $v(i)$ in the following.
\end{definition}

Notice that if $i,j$ are not adjacent then corresponding vectors are orthogonal. To interpret adjacency to orthogonality, we should consider the OR of $\bar{G}$ (complement of $G$). Thus, Abramsky and Brandenburger proposed the faithful orthogonal co-representation\citep{Abramsky2011sheaf}.

\begin{definition}
Let $G$ be a graph, and $v$ an OR of $\bar{G}$. $v$ is said to be a \textbf{faithful orthogonal co-representation} of $G$ if $v$ is injective and $i,j$ are adjacent iff $v_i\bot v_j$.
\end{definition}

This definition ensures that a graph corresponds to unique orthogonality graph. Furthermore, we define another notion.

\begin{definition}
Let $G$ be a graph, and $v$ an OR of $\bar{G}$. $v$ is said to be a \textbf{linearly independent orthogonal co-representation} of $G$ if vector set $\{v_i:i\in V(G)\}$ are linearly independent.
\end{definition}

For graph $G$, it is important to know whether it has an orthogonal co-representation or not. A result for the question was mentioned in \cite{Abramsky2011sheaf}, but it is not powerful enough for us. Here, we need to prove another result.

\begin{theorem}\label{thm4}
Each finite graph $G$ with $n$ vertexes has a faithful and linearly independent orthogonal co-representation in $\mathbb{R}^n$
\end{theorem}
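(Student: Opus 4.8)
The plan is to recast the construction of an orthogonal co-representation as the construction of a suitable Gram matrix. Producing a faithful and linearly independent orthogonal co-representation $v:V(G)\to\mathbb{R}^n$ amounts to exhibiting a real symmetric $n\times n$ matrix $M$ that is positive definite and satisfies $M_{ii}=1$ for every $i$, while for $i\neq j$ one has $M_{ij}=0$ precisely when $i,j$ are adjacent in $G$. Indeed, once such an $M$ is found, positive definiteness yields a factorization $M=V^{T}V$ with $V\in\mathbb{R}^{n\times n}$ invertible (via the Cholesky or spectral decomposition). Taking $v_i$ to be the $i$-th column of $V$ gives $\langle v_i,v_j\rangle=M_{ij}$, so each $v_i$ is a unit vector, $v_i\bot v_j$ holds exactly when $i,j$ are adjacent in $G$, and the $v_i$ are linearly independent since $V$ is invertible (injectivity being then automatic). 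This is exactly a faithful and linearly independent orthogonal co-representation in $\mathbb{R}^n$.

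First I would write $M$ down explicitly. Let $A$ be the adjacency matrix of the complement $\bar{G}$, so that $A_{ij}=1$ exactly when $i,j$ are non-adjacent in $G$ and $A_{ii}=0$, and set $M=I+\varepsilon A$ for a parameter $\varepsilon>0$ to be chosen. The required pattern is then immediate: $M_{ii}=1$, and for $i\neq j$ the entry $M_{ij}=\varepsilon A_{ij}$ vanishes when $i,j$ are adjacent in $G$ and equals $\varepsilon\neq 0$ otherwise. Thus $M$ encodes the faithfulness requirement ``adjacent iff orthogonal'' for any value of $\varepsilon$.

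The one delicate point, and the step I expect to be the main obstacle, is guaranteeing that $M$ is positive definite, since it is positive definiteness that simultaneously forces linear independence and full dimension $n$ while the zero pattern is already fixed. Here I would invoke a spectral bound: the eigenvalues of the symmetric $0$--$1$ matrix $A$ are bounded in absolute value by its maximum row sum, that is by the maximum degree of $\bar{G}$, hence by $n-1$. Consequently the eigenvalues of $M=I+\varepsilon A$ have the form $1+\varepsilon\lambda$ with $|\lambda|\le n-1$, so any choice $0<\varepsilon<\frac{1}{n-1}$ makes every eigenvalue strictly positive and $M$ positive definite (and when $\bar{G}$ has no edges, $A=0$ and $M=I$ works directly). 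The Gram-matrix factorization of the preceding paragraph then produces the desired vectors in $\mathbb{R}^n$, completing the argument.
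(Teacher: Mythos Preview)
Your proof is correct and takes a genuinely different route from the paper. The paper proceeds by induction on $n$: assuming a faithful linearly independent co-representation $\{v_1,\dots,v_{n-1}\}$ of an induced subgraph on $n-1$ vertices, it seeks $v_n=x_1v_1+\cdots+x_{n-1}v_{n-1}+e_n$ (with $e_n$ orthogonal to the span of the $v_i$) satisfying the required orthogonality and non-orthogonality constraints, and shows such $v_n$ exists by a dimension-counting argument (the solution set of the equalities minus finitely many hyperplanes is nonempty). Your argument instead builds the whole Gram matrix in one stroke as $M=I+\varepsilon A$ and uses a Gershgorin-type spectral bound to secure positive definiteness. Your approach is shorter, non-inductive, and yields an explicit one-parameter family of representations; the paper's inductive argument is more laborious but has the incidental feature that it shows how to extend any given faithful linearly independent co-representation by one additional vertex, which could be useful if one wanted to grow a representation vertex by vertex rather than produce it globally.
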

\begin{proof}
    We use the mathematical induction. When $n=1$, $G$ has a faithful and linearly independent orthogonal co-representation in $\mathbb{R}$, that is, $\{v_1\}$. Then we assume the result for general $n-1$, and show it holds for $n$.

    For a graph $G$ with $n$ vertexes, its every induced subgraph with $n-1$ vertexes has a faithful and linearly independent orthogonal co-representation: $\{v_1,v_2,...,v_{n-1}\}$. They span an $(n-1)$-dimensional subspace of $\mathbb{R}^n$. The problem is thus reduced to proving that there is a vector $v_n\in\mathbb{R}^n$ such that $\parallel v_n\parallel=1$, $v_n\bot v_i$ iff $n,i$ are adjacent $(i=1,2,...,n-1)$, and $\{v_1,...,v_{n-1},v_n\}$ are linearly independent.

    The subspace $Span(v_1,...,v_{n-1})^{\bot}$ is one-dimensional. Let $e_n$ be a unit vector in it, and then $\{v_1,...,v_{n-1},e_n\}$ is a basis of $\mathbb{R}^n$. Suppose $v'_n=x_1v_1+...+x_{n-1}v_{n-1}+e_n$. We need $v'_n$ satisfying:

     1.$x_1(v_i,v_1)+...+x_{n-1}(v_i,v_{n-1})+0=0$, i.e. $v'_n\bot v_i$, iff $n,\ i$ are adjacent;

     2.$x_1(v_i,v_1)+...+x_{n-1}(v_i,v_{n-1})+0\neq0$ iff $n,i$ are not adjacent,

     where $(,)$ is the notation for inner product on $\mathbb{R}^n$.

    If $n$ is adjacent with all of the $i=1,...,n-1$, we have
\begin{equation}
    \begin{cases}
     x_1(v_1,v_1)+x_2(v_1,v_2)+...+x_{n-1}(v_1,v_{n-1})=0\\
     x_1(v_2,v_1)+x_2(v_2,v_2)+...+x_{n-1}(v_2,v_{n-1})=0\\
     ......\\
     x_1(v_{n-1},v_1)+x_2(v_{n-1},v_2)+...+x_{n-1}(v_{n-1},v_{n-1})=0.
    \end{cases}\nonumber
\end{equation}

Because $\{v_1,...,v_{n-1}\}$ is linearly independent, $((v_i,v_j))$ is a Gram matrix, with $rank \ n-1$. Thus the equation system has a unique solution $\vec{x}=0$, which gives $v'_n=e_n$.

If $n$ is not adjacent with some vertexes, then we substitute the corresponding equations with inequalities in the equation system. It can be proved that the new system still has solutions.

If the system has $m$ equalities, then the subsystem made up of them has an $(n-1-m)$-dimensional solution space $S$. We mark the $n-1-m$ inequalities left as $1,2,...,n-1-m$, which respectively have $(n-2)$-dimensional solution spaces $S_1,S_2,...,S_{n-1-m}$ as equalities. We have

\begin{equation}
\begin{split}
S'&=\{\vec{x}:\vec{x}\in S\ and\ \vec{x}\notin S_1,...,S_{n-1-m}\}\\
   &=S\cap\overline{S_1}\cap...\cap\overline{S_{n-1-m}}\\
   &=S\cap(\overline{S}\cup\overline{S_1})\cap...\cap(\overline{S}\cup\overline{S_{n-1-m}})\\
   &=S\cap(\overline{(S\cap S_1)}\cap...\cap\overline{(S\cap S_{n-1-m})})\\
   &=S\cap\overline{(S\cap S_1)\cup...\cup(S\cap S_{n-1-m})}\\
   &=S-((S\cap S_1)\cup...\cup(S\cap S_{n-1-m})).
\end{split}\nonumber
\end{equation}

$S\cap S_i$ are all $(n-m-2)$-dimensional subspaces of $(n-m-1)$-dimensional space $S$, that is, hyperplanes. Since the union of finite hyperplanes properly contains in the whole space, $S'$ is not empty. Thus the new system has solutions.

Therefore, there exists a vector $v'_n=x_1v_1+...+x_{n-1}v_{n-1}+e_n$ such that $v'_n\bot v_i$ iff $n,i$ are adjacent. Since $e_n$ has coefficient 1 in $v'_n$, $\{v_1,...,v_{n-1},v'_n\}$ is linearly independent. Finally, let $v_n=v'_n/\parallel v'_n\parallel$. Then $v_n$ is the desired vector, so $G$ has a faithful and linearly independent orthogonal co-representation in $\mathbb{R}^n$, and the induction goes through.
\end{proof}

Due to theorem \ref{thm4}, the pentagon in Fig.\ref{fig8} has a faithful and linearly independent orthogonal co-representation in $\mathbb{R}^5$, which differs from the one found by Cabello et al. in $\mathbb{R}^3$ \citep{Adan2010Contextuality}. Both of them can be used to investigate the probabilities of exclusive events, and one will see that the linear independence has special benefits.

\subsection{Higher dimensional context extension}
With theorem \ref{thm4}, we show how to extend a graph to an atom graph.

 It is easy to see there is a one-to-one correspondence with finite graph $G$ and its total maximal cliques. Therefore we can define

\begin{definition}
 Let $G$ be a finite graph, whose total maximal cliques are $C_1,C_2,...,C_N$. Let $x_1,...,x_N$ be $N$ vertexes irrelevant to vertexes in $G$. The \textbf{higher-dimensional context extension} of $G$, denoted by $G^e$, is a graph with total maximal cliques: $C_1\cup\{x_1\},C_2\cup\{x_2\},...,C_N\cup\{x_N\}$.
\end{definition}

We call the size of the maximum clique of $G$ \textbf{dimension} of $G$. $G^e$ is gotten by adding a vertex into every maximal clique of $G$. The dimension of $G^e$ must be higher than $G$. It is a tool to study $G$ in higher dimensions.

For convenience, we introduce the notion "substate" for subgraph.

\begin{definition}
A \textbf{substate} on a graph $G$ is defined by a map $p:V(G)\rightarrow[0,\ 1]$ such that for each maximal clique $C$ of $G$, $\sum_{v\in C}p(v)\leq1$. Use $ss(G)$ to denote the set of substates on $G$.
\end{definition}

And we have

\begin{proposition}\label{prop5}
    If $G$ is a finite graph, then $ss(G)\cong s(G^e)$
\end{proposition}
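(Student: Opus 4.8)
The plan is to build an explicit bijection in which each new vertex $x_i$ plays the role of a \emph{slack variable} absorbing the deficit $1-\sum_{v\in C_i}p(v)$ of the $i$-th clique inequality. The first step I would take is to record the structural fact that, by the definition of $G^e$, its maximal cliques are exactly $C_1\cup\{x_1\},\dots,C_N\cup\{x_N\}$; in particular each added vertex $x_i$ lies in \emph{precisely one} maximal clique of $G^e$, namely $C_i\cup\{x_i\}$. This single-membership is what makes the slack-variable idea unambiguous, since the value at $x_i$ is then pinned down by exactly one equality constraint.

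Next I would define the forward map $\Phi:ss(G)\to s(G^e)$ by setting $\Phi(p)(v)=p(v)$ for $v\in V(G)$ and $\Phi(p)(x_i)=1-\sum_{v\in C_i}p(v)$, and check well-definedness. Because $p$ is a substate, $\sum_{v\in C_i}p(v)\le 1$ forces $\Phi(p)(x_i)\ge 0$, while $p(v)\ge 0$ for all $v$ forces that same sum to be $\ge 0$, so $\Phi(p)(x_i)\le 1$; hence $\Phi(p)$ takes values in $[0,1]$. The state condition is then immediate on each maximal clique $C_i\cup\{x_i\}$, since $\sum_{v\in C_i}\Phi(p)(v)+\Phi(p)(x_i)=1$ by construction, giving $\Phi(p)\in s(G^e)$. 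For the reverse direction I would take $\Psi:s(G^e)\to ss(G)$ to be restriction, $\Psi(q)=q|_{V(G)}$: for each maximal clique $C_i$ of $G$ the state equality on $C_i\cup\{x_i\}$ reads $\sum_{v\in C_i}q(v)+q(x_i)=1$, so $\sum_{v\in C_i}\Psi(q)(v)=1-q(x_i)\le 1$ as $q(x_i)\ge 0$, proving $\Psi(q)$ is a substate.

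Finally I would verify $\Psi\circ\Phi=\mathrm{id}$ and $\Phi\circ\Psi=\mathrm{id}$: the former holds because $\Phi$ leaves every value on $V(G)$ unchanged, and the latter because the unique clique equality containing $x_i$ forces $q(x_i)=1-\sum_{v\in C_i}q(v)$, which is exactly the value $\Phi$ reassigns after restriction. This yields that $\Phi$ is a bijection (in fact an affine isomorphism of the two polytopes), so $ss(G)\cong s(G^e)$. The computation is routine; the only point genuinely requiring care is the structural observation in the first step, since it is the single-clique membership of each $x_i$ that guarantees the backward value is uniquely reconstructible and that the clique \emph{inequality} for $G$ corresponds bijectively to the nonnegativity $q(x_i)\ge 0$. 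Beyond this bookkeeping I do not anticipate a substantial obstacle.
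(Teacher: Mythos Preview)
Your proposal is correct and follows essentially the same approach as the paper: restriction from $s(G^e)$ to $ss(G)$ in one direction, and the slack-variable assignment $x_k\mapsto 1-\sum_{i\in C_k}p(i)$ in the other. Your write-up is simply more explicit than the paper's, which dispatches well-definedness and uniqueness with ``it is easy to see''; in particular, your emphasis on each $x_i$ lying in a unique maximal clique makes transparent the point the paper leaves implicit.
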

\begin{proof}
    For each state on $G^e$, its restriction on $G$ is a substate from the definition.

    For each $v\in ss(G)$, if the total maximal cliques of $G^e$ are $C_1\cup\{x_1\},C_2\cup\{x_2\},...,C_N\cup\{x_N\}$, then we define a state $v'$ on $G^e$ by $v'(i)=v(i)(i\in V(G))$ and $v'(x_k)=1-\sum_{i\in C_k}v(i)$. It is easy to see that $v'$ is the unique state on $G^e$ such that $v'(i)=v(i)(i\in V(G))$.
\end{proof}

Now we prove the central theorem of this section.

\begin{theorem}\label{thm6}
    If $G$ is a finite graph, then $G^e$ is the atom graph of an finite $epBA$.
\end{theorem}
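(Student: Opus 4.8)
The plan is to realize $G^e$ as the atom graph of a concrete partial Boolean subalgebra of $\mathcal{P}(\mathcal{H})$ for a finite-dimensional real Hilbert space $\mathcal{H}$; since every such subalgebra is finite and exclusive, this exhibits $G^e$ as the atom graph of a finite $epBA$. First I would invoke Theorem~\ref{thm4} to fix a faithful and linearly independent orthogonal co-representation $\{v_i : i \in V(G)\}$ of $G$ in $\mathbb{R}^n$ ($n=|V(G)|$), and then view these vectors inside $\mathbb{R}^{n+1}$ (adding one dimension guarantees that no maximal clique of $G$ spans the ambient space, which is needed below). Set $\hat P_i=|v_i\rangle\langle v_i|$ for $i\in V(G)$, and for each maximal clique $C_k$ of $G$ put $W_k=\operatorname{span}\{v_i:i\in C_k\}$ and let $\hat P_{x_k}=\neg\bigvee_{i\in C_k}\hat P_i$ be the projector onto $W_k^{\bot}$. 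Let $B$ be the partial Boolean subalgebra of $\mathcal{P}(\mathbb{R}^{n+1})$ generated by all these projectors.

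Next I would identify the maximal contexts and atoms of $B$ through the commutation relations among generators: $\hat P_i\odot\hat P_j$ iff $v_i\bot v_j$ iff $ij\in E(G)$ (faithfulness); $\hat P_i\odot\hat P_{x_k}$ iff $i\in C_k$; and $\hat P_{x_k}\odot\hat P_{x_l}$ fails for $k\neq l$ (the crux, below). For the mixed relation: if $i\in C_k$ then $v_i\in W_k$, so $\hat P_i\hat P_{x_k}=0$; if $i\notin C_k$ then linear independence gives $v_i\notin W_k$ and maximality of $C_k$ gives $v_i\notin W_k^{\bot}$, so $\hat P_i$ and $\hat P_{x_k}$ do not commute. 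Hence the maximal contexts of $B$ are generated by the sets $\{\hat P_i:i\in C_k\}\cup\{\hat P_{x_k}\}$, within which the projectors are mutually orthogonal with $\sum_{i\in C_k}\hat P_i+\hat P_{x_k}=\hat 1$. Each such context is a finite Boolean algebra whose atoms are exactly these projectors; since $B$ is a finite subalgebra of an exclusive complete atomic $pBA$, the lemma in the proof of Theorem~\ref{thm1} ($A(C)\subseteq A(B)$ for maximal $C$) yields $A(B)=\{\hat P_i\}\cup\{\hat P_{x_k}\}$ with the induced adjacencies reproducing $G^e$ vertex for vertex.

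The main obstacle is the step $\hat P_{x_k}\not\odot\hat P_{x_l}$ for distinct maximal cliques, since otherwise a spurious context (and a spurious edge $x_kx_l$) would appear and $AG(B)$ would differ from $G^e$. I would prove it using linear independence as follows. Choose $j\in C_k\setminus C_l$; since $C_l$ is maximal and $j\notin C_l$, some $i^*\in C_l$ is non-adjacent to $j$, and $i^*\notin C_k$ (else $i^*,j$ would be a non-edge inside the clique $C_k$). Because $v_j\in W_k$ we have $\hat P_{x_k}v_j=0$, hence $\hat P_{x_l}\hat P_{x_k}v_j=0$. On the other hand $\hat P_{x_l}v_j=v_j-\sum_{i\in C_l}\langle v_j,v_i\rangle v_i$, whose $v_{i^*}$-coefficient $\langle v_j,v_{i^*}\rangle$ is nonzero; linear independence then forces $\hat P_{x_l}v_j\notin W_k$ (using $\operatorname{span}\{v_i:i\in C_k\}\cap\operatorname{span}\{v_i:i\in C_l\}=\operatorname{span}\{v_i:i\in C_k\cap C_l\}$ together with $i^*\notin C_k$), so $\hat P_{x_k}\hat P_{x_l}v_j\neq 0$. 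Thus the two projectors disagree on $v_j$ and cannot commute. This is precisely where the linear independence of Theorem~\ref{thm4}, rather than mere faithfulness, becomes essential.

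Finally I would record the routine verifications: $B$ is finite (it is the union of finitely many finite maximal contexts), hence finite dimensional, atomic and complete, and it is exclusive as a subalgebra of $\mathcal{P}(\mathbb{R}^{n+1})\in epBA$; and the assignment $i\mapsto\hat P_i$, $x_k\mapsto\hat P_{x_k}$ is a graph isomorphism $G^e\cong AG(B)$ by the adjacency computation above. I expect the only genuinely delicate point to be the non-commutation of the complementary projectors $\hat P_{x_k}$; everything else is bookkeeping with the already-established lemmas.
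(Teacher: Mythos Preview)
Your plan is correct and follows the same overall strategy as the paper: use the faithful, linearly independent orthogonal co-representation of $G$ furnished by Theorem~\ref{thm4}, adjoin the complementary projectors $\hat P_{x_k}=\neg\bigvee_{i\in C_k}\hat P_i$, and verify that the commutation pattern among these projectors reproduces the edge set of $G^e$. Two details differ from the paper and are worth a remark. First, you pass to $\mathbb{R}^{n+1}$ rather than $\mathbb{R}^n$; this is harmless and in fact cleaner, since in the paper's $\mathbb{R}^n$ setting the case $G=K_n$ makes $\hat P_{x_1}=0$, a degenerate situation the paper does not address. Second, for the crucial step $\hat P_{x_k}\not\odot\hat P_{x_l}$ ($k\neq l$) the paper argues by contradiction in a more structural way: assuming commutation, it shows that $P_{A_1}=\bigvee_{i\in C_k\setminus C_l}\hat P_i$ and $P_{A_3}=\bigvee_{i\in C_l\setminus C_k}\hat P_i$ must commute, then uses linear independence to get $P_{A_1}P_{A_3}=0$, hence (by faithfulness) $C_k\cup C_l$ is a clique, contradicting maximality. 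Your argument instead produces an explicit witness vector $v_j$ with $\hat P_{x_l}\hat P_{x_k}v_j=0\neq\hat P_{x_k}\hat P_{x_l}v_j$; this is more elementary and pinpoints equally well where faithfulness (the nonzero inner product $\langle v_j,v_{i^*}\rangle$) and linear independence (reading off the $v_{i^*}$-coefficient) enter. One small imprecision you share with the paper: the claim that the maximal contexts of the generated subalgebra $B\subseteq\mathcal P(\mathbb R^{n+1})$ are \emph{exactly} the Boolean algebras on $\{\hat P_i:i\in C_k\}\cup\{\hat P_{x_k}\}$ tacitly assumes no ``accidental'' commutativity among derived elements; if you want to be fully rigorous, it is cleanest to take $B$ to be the colimit of these Boolean algebras in $pBA$ (which still embeds in $\mathcal P(\mathbb R^{n+1})$, hence lies in $epBA$), after which the identification $AG(B)\cong G^e$ is immediate.
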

\begin{proof}
      Suppose $|V(G)|=n$. Applying theorem \ref{thm4}, $G$ has a faithful and linearly independent orthogonal co-representation $\{v_1,...,v_n\}$ in $\mathbb{R}^n$, which correspond to a projector set $\{\hat{P}_1,...,\hat{P}_n\}$, where $\hat{P}_i$ projects $\mathbb{R}^n$ to $Span(v_i)$.

     If $G$ has $N$ maximal cliques $C_1,...,C_N$, we set $P_k=\lnot(\bigvee_{i\in C_k}\hat{P}_i)(k=1,...,N)$. $P_k$ is a projector onto $Span(\{v_i:i\in C_k\})^{\bot}$. It will be proved that $\{\hat{P}_1,...,\hat{P}_n\}\cup\{P_1,...,P_N\}$ is the set of atoms of an $epBA$, whose atom graph is exactly $G^e$.

    Firstly we prove that every $P_k$ is commeasurable with $\hat{P}_i(i\in C_k)$, but not commeasurable with $\hat{P}_j(j\notin C_k)$. If $P_k$ is commeasurable with a $\hat{P}_j(j\notin C_k)$, thus $P_k\hat{P}_j=\hat{P}_jP_k$, then we have $P_k\hat{P}_j=\hat{P}_j$ or $P_k\hat{P}_j=\hat{0}$.

    If $P_k\hat{P}_j=\hat{P}_j$, then $v_j\in Span(\{v_i:i\in C_k\})^{\bot}$, so $v_j\bot v_i$ for all $i\in C_k$. Thus $C_k\cup{v_j}$ is a clique, which contradicts to that $C_k$ is maximal.

    If $P_k\hat{P}_j=\hat{0}$, then $v_j\bot Span(\{v_i:i\in C_k\})^{\bot}$, so $v_j\in Span(\{v_i:i\in C_k\})$, which contradicts to that $\{v_1,...,v_n\}$ is linearly independent.

    Next we prove that each $P_k(k=1,...,N)$ is not commeasurable with others. It is trivial for $N=1$. If $N>1$, and $P_{k_1}$, $P_{k_2}$ ($k_1\neq k_2$) are commeasurable, we have $P_{k_1}=P'\vee P$, $P_{k_2}=P''\vee P$ and $P'P''=P'P=P''P=0$, $P=P_{k_1}P_{k_2}$.

    For maximal cliques $C_{k_1}$ and $C_{k_2}$, set $A_1=C_{k_1}-C_{k_2}$, $A_2=C_{k_1}\cap C_{k_2}$ and $A_3=C_{k_2}-C_{k_1}$. $A_1$ and $A_3$ are non-empty. Let $P_{A_t}=\bigvee_{i\in A_t}\hat{P}_i(t=1,2,3)$, which gives that $P'\vee P\vee P_{A_1}\vee P_{A_2}=P_{k_1}\vee\bigvee_{i\in C_{k_1}}\hat{P}_i=\hat{1}$, and $P''\vee P\vee P_{A_2}\vee P_{A_3}=P_{k_2}\vee\bigvee_{i\in C_{k_2}}\hat{P}_i=\hat{1}$, as shown in Fig.\ref{fig11}.
    \begin{figure}[H]
    \centering
    \includegraphics[width=0.3\linewidth]{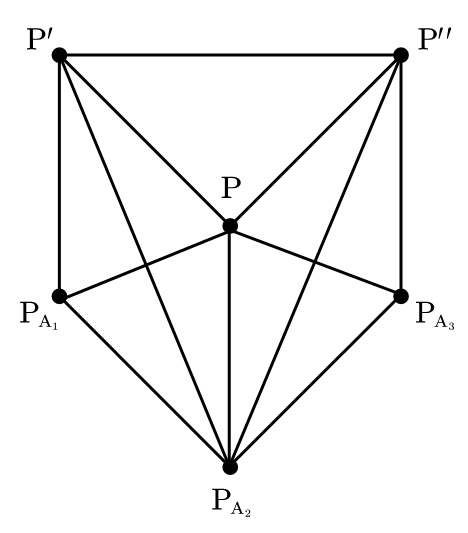}
    \caption{The orthogonal graph of $P$, $P'$, $P''$, $P_{A_1}$, $P_{A_2}$ and $P_{A_3}$.}\label{fig11}
    \end{figure}

    $P',\ P''$ and $P$ are pairwise orthogonal, so they generate a Boolean algebra. Thus $P_{A_1}\vee P_{A_2}=\lnot(P'\vee P)$ and $P_{A_2}\vee P_{A_3}=\lnot(P''\vee P)$ are in the same Boolean algebra, i.e. $P_{A_1}\vee P_{A_2}$ and $P_{A_2}\vee P_{A_3}$ are commeasurable. Then there exist two projectors $P'_{A_1}$ and $P'_{A_3}$ such that $P'_{A_1}P'_{A_3}=\hat{0}$, $P_{A_1}=P'_{A_1}\vee P_{A_1}P_{A_3}$ and $P_{A_3}=P'_{A_3}\vee P_{A_1}P_{A_3}$, . Thus $P_{A_1}$ and $P_{A_3}$ are commeasurable.

    Because $\{v_i:i\in A_1\cup A_3\}$ is linearly independent, if $x=\sum_{i\in A_1}a_iv_i=\sum_{j\in A_3}b_jv_j$, then $\sum_{i\in A_1}a_iv_i-\sum_{j\in A_3}b_jv_j=0$. Thus $a_i,b_j=0$, i.e. $x=\vec{0}$, which means that $P_{A_1}P_{A_3}=\hat{0}$, in other words, $Span(\{v_i:i\in A_1\})$ and $Span(\{v_i:i\in A_3\})$ are orthogonal. However, it leads that the vectors in $\{v_i:i\in A_1\}$ and $\{v_i:i\in A_3\}$ are orthogonal. Since $v$ is faithful, we have that $C_{k_1}\cup C_{k_2}$ is a clique. It causes $C_{k_1}=C_{k_2}$, which contradicts to $k_1\neq k_2$.

    To sum up, $P_k$ is only commeasurable with $\hat{P}_i(i\in C_k)$. Let $B$ be a partial Boolean algebra generated by the set $A=\{\hat{P}_1,...,\hat{P}_n\}\cup\{P_1,...,P_N\}$. For each $i\in\{1,...,n\}$, $\hat{P}_i$ is an atom obviously. For each $k\in\{1,...,N\}$,  because $P_k$ is only commeasurable with $\hat{P}_i(i\in C_k)$, and $P_k\hat{P}_i=\hat{0}(i\in C_k)$, the Boolean algebra generated by $\{P_k\}\cup\{\hat{P}_i:i\in C_k\}$ is isomorphic to the Boolean algebra with $2^{|C_k|+1}$ elements. Then we can see $P_k$ is an atom. And $A$ contains all the atoms of $B$. $B$ is a partial subalgebra of projectors, so it is an $epBA$, and the atom graph of $B$ is isomorphic to $G^e$.
\end{proof}

With theorem \ref{thm6}, we immediately get that each finite graph is an induced subgraph of the atom graph of an finite $epBA$.

For instance, the exclusivity graph $G$ in Fig.\ref{fig3} is an induced subgraph of the atom graph in Fig.\ref{fig12}, where $\hat{P}_{ij}:=\neg(\hat{P}_i\lor\hat{P}_j)$

\begin{figure}[H]
    \centering
    \includegraphics[width=0.4\linewidth]{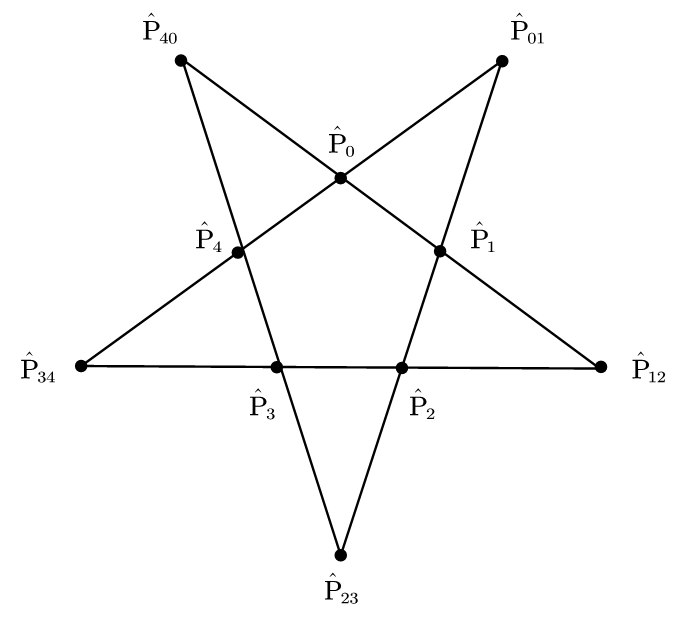}
    \caption{The atom graph of KCBS system}\label{fig12}
\end{figure}

$G^e$ complements all the elementary events overlooked by $G$. Another approach to get all the elementary events was used by Cabello et al.\citep{Adan2014Graph}, which lets Fig.\ref{fig8} be a subgraph of the graph in Fig.\ref{fig13}.

\begin{figure}[H]
    \centering
    \includegraphics[width=0.5\linewidth]{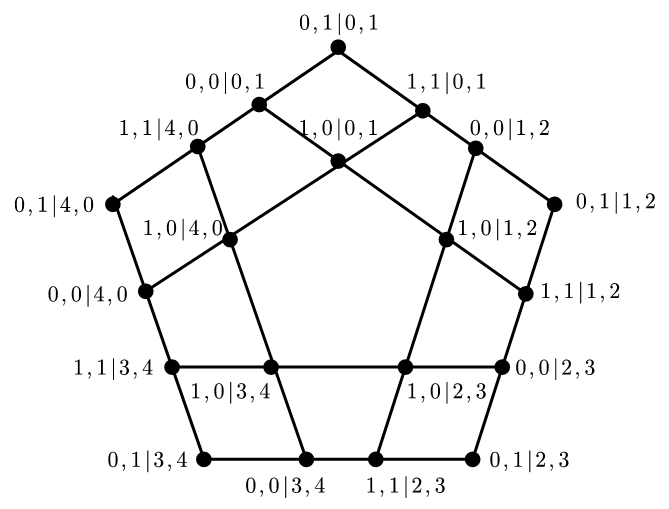}
    \caption{The exclusivity graph for KCBS experiment used by \cite{Adan2014Graph}. $x,y|i,j$ represents that the outcomes $\hat{P}_i,\hat{P}_j$ are $x,y$ $(i,j=0,1,2,3,4.\ x,y=0,1)$.}\label{fig13}
\end{figure}

However, because of the exclusivity relation of $\hat{P}_i(i=0,1,2,3,4)$, the vertexes $1,1|i,i+1$ in Fig.\ref{fig13} are all impossible events, which should be deleted. $1,0|i,i+1$ and $0,1|i-1,i$ are equivalent, so they should be merged in pairs. Therefore, Fig.\ref{fig13} doesn't give a correct expression of the KCBS experiment. After simplifying, Fig.\ref{fig13} is reduced to Fig.\ref{fig12}, which presents all the elementary events precisely.

The higher-dimensional context extension of graph $G$ is one way to extend $G$. Another method is the equal-dimensional context extension, which adds one point to every maximal clique of $G$ except the maximum cliques. However, different from $G^e$, the equal-dimensional context extension may not be an atom graph, which means that $G$ may have no interpretation to an equal-dimensional quantum system.

For now, we have connected the atom graph and the exclusivity graph. The next section moves on to consider the KS contextuality.

\section{KS contextuality}
KS theorem is the earliest description of the contextuality of quantum systems \citep{Kochen1967The}. It can be depicted by partial Boolean algebra \citep{Abramsky2020The}, which states that if $\mathcal{H}$ be a Hilbert space and $dim(\mathcal{H})\geq 3$, then there is no homomorphism from $\mathcal{P}(\mathcal{H})$ to $\{0,1\}$ (Kochen and Specker, 1967).

In other words, when $dim(\mathcal{H})\geq 3$, it is impossible to assign truth-values to all properties in quantum systems simultaneously, which leads to the impossibility of assigning values to all observables simultaneously. The property ``there is no homomorphism to $\{0,1\}$" is called KS contextuality.

With theorems \ref{thm1} and \ref{thm2}, we have that

\begin{proposition}\label{prop7}
    If $Q$ is a finite dimensional $QS$, then $Q$ presents KS contextuality iff there is no 0-1 state on $AG(Q)$.
\end{proposition}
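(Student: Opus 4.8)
The plan is to reduce the statement to Theorem~\ref{thm2} by checking that the bijection $s(Q)\cong s(AG(Q))$ constructed there carries $0$-$1$ states to $0$-$1$ states in both directions. First I would observe that, since $Q\in QS\subseteq epBA$ and $Q$ is finite dimensional, Theorem~\ref{thm2} applies: the restriction map $f:s(Q)\to s(AG(Q))$, $p\mapsto p|_{A(Q)}$, is a bijection, whose inverse sends $p'\in s(AG(Q))$ to the state $p$ determined by $p(\bigvee A)=\sum_{a\in A}p'(a)$. Recall also that, by the discussion preceding the proposition, $Q$ presents KS contextuality precisely when there is no homomorphism from $Q$ to $\{0,1\}$, i.e. no $0$-$1$ state on $Q$; so it suffices to prove that $Q$ admits a $0$-$1$ state iff $AG(Q)$ admits one.

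Next I would verify the forward direction: if $p\in s(Q)$ has range $\{0,1\}$, then its restriction $p|_{A(Q)}$ obviously takes values in $\{0,1\}$ (as $A(Q)\subseteq Q$), so $f(p)$ is a $0$-$1$ state on $AG(Q)$. The backward direction is the only real point. Given a $0$-$1$ state $p'$ on $AG(Q)$, I must show its extension $p=f^{-1}(p')$ is again $0$-$1$. For any $b=\bigvee A\in Q$ one has $p(b)=\sum_{a\in A}p'(a)$, a finite sum of values in $\{0,1\}$, hence a nonnegative integer; but $p$ is a state, so $p(b)\in[0,1]$, and the only integers in $[0,1]$ are $0$ and $1$. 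Thus $p$ has range $\{0,1\}$, and $f$ restricts to a bijection between the $0$-$1$ states of $Q$ and those of $AG(Q)$.

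Finally I would conclude by contraposition: the bijection just established shows $Q$ has a $0$-$1$ state iff $AG(Q)$ has one, hence $Q$ has no $0$-$1$ state iff $AG(Q)$ has no $0$-$1$ state, which is exactly the claim. I do not expect a genuine obstacle here; the only subtlety lies in the backward direction, where one must combine the additivity of $p'$ over atoms (forcing $p(b)$ to be integer-valued) with the defining range constraint $[0,1]$ of a state to pin the value down to $\{0,1\}$. Everything else is a direct appeal to Theorem~\ref{thm2} together with the identification of $0$-$1$ states with truth-value assignments.
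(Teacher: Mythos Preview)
Your proposal is correct and follows the paper's approach: the paper gives no explicit proof, merely stating that the proposition follows from Theorems~\ref{thm1} and~\ref{thm2}, and your argument supplies exactly the missing detail by showing that the bijection $s(Q)\cong s(AG(Q))$ of Theorem~\ref{thm2} restricts to $0$-$1$ states (the key point being that the extension of a $0$-$1$ state takes integer values in $[0,1]$). In fact Theorem~\ref{thm2} alone suffices, as your write-up makes clear; the paper's reference to Theorem~\ref{thm1} appears to be inessential here.
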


Two important KS-proofs with graphs were given by Kochen, Specker \citep{Kochen1967The}, and Cabello et al \citep{Cabello1997Bell}. The KS graph has 117 vertexes, and Cabello's graph has 18 vertexes. If the existence of 0-1 states on a graph leads to a contradiction, we say it introduces a KS-proof. The remaining part of this section will offer a general and parametric expression of KS contextuality, which introduces a type of NC inequality.

\begin{definition}
     If $G$ is a finite graph, and $i\in V(G)$. define $c_G(i):=||\{C|C\ is\ a\ maximal\ clique\ of\ G,\ i\in C|\}$. If $S\subseteq V(G)$, define $c_G(S):=\sum_{i\in S}c_G(i)$
\end{definition}

We call $c_G(i)$ the number of associated contexts of $i$, and if $I\in V(G)$ is an independent set of $G$, $c_G(I)$ is called a number of independently associated contexts of $G$. Therefore, $\alpha(G;c_G)$ is the greatest number of independently associated contexts of $G$. Exactly, what we define is a special weight $c_G$.

\begin{definition}
If $G$ is a finite graph, and $v\in s(G)$, define $S(v,c_G):=\sum_{i\in V(G)}c_G(i)v(i)$, and $c(G)$ denotes the total number of maximal cliques of $G$.
\end{definition}

It is straightforward to show that $S(v,c_G)=c(G)$ by $S(v,c_G)=\sum_{k=1}^N\sum_{i\in C_k}v(i)=\sum_{k=1}^N1=N$, where $C_k$ is the k'th maximal clique of $G$. $S(v,c_G)=c(G)$ is an important equation for states on quantum systems, and it is also a generalization of the equation used for KS-proof in \cite{Cabello2009Universality}.

\begin{lemma}\label{prop8}
   If $G$ is a finite graph, then $\alpha(G;c_G)\leq c(G)$.
\end{lemma}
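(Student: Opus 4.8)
The plan is to reduce the weighted independence number $\alpha(G;c_G)$ to a counting estimate that is exactly the inequality version of the identity $S(v,c_G)=c(G)$ just established for states. By definition $\alpha(G;c_G)=\max\{c_G(I):I\text{ an independent set of }G\}$, so it suffices to prove that $c_G(I)\le c(G)$ for every independent set $I\subseteq V(G)$, and then take the maximum over $I$.

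The key observation I would record first is a single combinatorial fact: if $C_1,\dots,C_N$ are the maximal cliques of $G$ (so $N=c(G)$), then an independent set meets each of them in at most one vertex, i.e. $|I\cap C_k|\le 1$ for every $k$. This holds because each $C_k$ is a clique and $I$ contains no two adjacent vertices; equivalently, the indicator function $\mathbf{1}_I$ of $I$ is a substate of $G$, since $\sum_{v\in C_k}\mathbf{1}_I(v)=|I\cap C_k|\le 1$. With this in hand, I would expand $c_G(I)$ by double counting the incidences between the vertices of $I$ and the maximal cliques: recalling that $c_G(i)$ counts the maximal cliques containing $i$,
\begin{equation}
c_G(I)=\sum_{i\in I}c_G(i)=\sum_{i\in I}\bigl|\{k:i\in C_k\}\bigr|=\sum_{k=1}^{N}|I\cap C_k|\le\sum_{k=1}^{N}1=N=c(G).
\nonumber
\end{equation}
Taking the maximum over all independent sets $I$ then gives $\alpha(G;c_G)\le c(G)$, as required.

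There is no genuine obstacle here; the entire content sits in the double-counting rearrangement, which is precisely the substate counterpart of $S(v,c_G)=c(G)$: a state distributes total weight exactly $1$ to each maximal clique, whereas an independent set contributes at most $1$ to each. The only point deserving care is that the bound $|I\cap C_k|\le 1$ must be applied to the maximal cliques (the ones enumerated by $c_G$ and counted by $c(G)$), which is immediate because they are in particular cliques.
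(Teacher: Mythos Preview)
Your proof is correct and is essentially the same as the paper's: both rest on the fact that an independent set meets each maximal clique in at most one vertex, so summing $c_G$ over $I$ counts each maximal clique at most once. The paper phrases this as ``distinct vertices in $I$ associate to distinct maximal cliques,'' while you spell it out via the double-counting identity $\sum_{i\in I}c_G(i)=\sum_k|I\cap C_k|$; these are two sides of the same incidence count.
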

\begin{proof}
If $I$ is an independent set of $G$, two vertexes $i,j\in I$ can not associate to the same maximal clique. Otherwise, if $i,j\in C$, then $i,j$ are adjacent since $C$ is a clique, which contradicts to that $I$ is an independent set. Therefore, distinct vertexes in $I$ associate to distinct maximal cliques. Thus $c_G(I)\leq c(G)$ for any independent set $I$ of $G$, which deduces that $\alpha(G;c_G)\leq c(G)$.
\end{proof}

Next, we give description of KS contextuality using the parameters of graphs.

\begin{theorem}\label{thm9}
   If $G$ is a finite graph, then the statements below are equivalent:\\
    $1.\ \alpha(G;c_G)=c(G)$.\\
    $2$. There exists a 0-1 state on $G$. \\
    $3$. There exists a 0-1 state $v\ on\ G\ s.t.\ S(v,c_G)=\alpha(G;c_G)$.\\
    $4$. There exists a state $v\ on\ G\ s.t.\ S(v,c_G)=\alpha(G;c_G)$.
\end{theorem}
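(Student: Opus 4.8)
The plan is to establish the cycle of implications $1 \Rightarrow 2 \Rightarrow 3 \Rightarrow 4 \Rightarrow 1$, using two ingredients already available: the identity $S(v,c_G)=c(G)$, which holds for \emph{every} state $v$ (proved just before Lemma \ref{prop8}), and Lemma \ref{prop8}, which gives $\alpha(G;c_G)\le c(G)$. The entire argument turns on one double-counting observation: for any vertex subset $I$,
$$c_G(I)=\sum_{i\in I}c_G(i)=\bigl|\{(i,C): i\in I,\ C\ \text{a maximal clique},\ i\in C\}\bigr|=\sum_{C}|I\cap C|,$$
the last sum running over all maximal cliques $C$. When $I$ is independent each summand is at most $1$, so $c_G(I)\le c(G)$, with equality precisely when $|I\cap C|=1$ for every maximal clique $C$.

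First I would prove $1 \Rightarrow 2$. Assuming $\alpha(G;c_G)=c(G)$, choose an independent set $I$ attaining the weight $c_G(I)=c(G)$ (the maximum is attained since $G$ is finite). By the observation above, equality forces $|I\cap C|=1$ for every maximal clique, so the indicator function $v$ of $I$ satisfies $\sum_{j\in C}v(j)=1$ for each maximal clique and is therefore a $0$-$1$ state. For $2 \Rightarrow 3$, given a $0$-$1$ state $v$ put $I=v^{-1}(1)$; no two vertices of $I$ can lie in a common maximal clique (its sum would then exceed $1$), so $I$ is independent, and because exactly one vertex of $I$ sits in each maximal clique the observation gives $c_G(I)=c(G)$. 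Hence $\alpha(G;c_G)\ge c(G)$, which with Lemma \ref{prop8} yields $\alpha(G;c_G)=c(G)$; combined with $S(v,c_G)=c(G)$ this shows $S(v,c_G)=\alpha(G;c_G)$, so $v$ witnesses statement $3$.

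The implication $3 \Rightarrow 4$ is immediate, as every $0$-$1$ state is a state. For $4 \Rightarrow 1$, a state $v$ with $S(v,c_G)=\alpha(G;c_G)$ satisfies $S(v,c_G)=c(G)$ by the identity, whence $\alpha(G;c_G)=c(G)$. This closes the cycle and proves all four statements equivalent.

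The only genuine content lies in the equivalence $1\Leftrightarrow 2$: recognizing that the weighted independence number $\alpha(G;c_G)$ reaches its ceiling $c(G)$ exactly when some independent set meets every maximal clique in a single vertex, which is the combinatorial form of a $0$-$1$ state. I expect this step to be the main obstacle, whereas statements $3$ and $4$ add nothing beyond the identity $S(v,c_G)=c(G)$ and drop out once the correspondence is in hand.
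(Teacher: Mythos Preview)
Your proposal is correct and follows essentially the same route as the paper: the same cycle $1\Rightarrow 2\Rightarrow 3\Rightarrow 4\Rightarrow 1$, the same use of Lemma~\ref{prop8} and the identity $S(v,c_G)=c(G)$, and the same recognition that an independent set $I$ with $c_G(I)=c(G)$ meets every maximal clique exactly once. Your double-counting formula $c_G(I)=\sum_C|I\cap C|$ simply makes explicit what the paper leaves implicit in the phrase ``vertexes in $I$ associate to all the maximal cliques.''
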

\begin{proof}
$1\Rightarrow 2$: Since $\alpha(G;c_G)=c(G)$, there is an independent set $I$ satisfying $c_G(I)=c(G)$. Thus vertexes in $I$ associate to all the maximal cliques of $G$. We define a map $v:V(G)\rightarrow \{0,1\}$ by $v(i)=1,(i\in I)$ and $v(i)=0,(i\notin I)$. Then $v$ is a 0-1 state on $G$.

$2\Rightarrow 3$: If $v$ is a 0-1 state on $G$, then the set $I=\{i\in V|v(i)=1\}$ is an independent set. Since for every maximal clique $C$, $\sum_{i\in C}v(i)=1$, there is exactly one vertex $i$ such that $v(i)=1$ in $C$. Thus the vertexes in $I$ associate to all the maximal cliques. Therefore, $c_G(I)=c(G)\leq\alpha(G;c_G)$. Applying lemma \ref{prop8}, we have $c(G)=\alpha(G;c_G)$. Therefore $S(v,c_G)=\alpha(G;c_G)$

$3\Rightarrow 4$: Follows from the relevant definitions.

$4\Rightarrow 1$: Obviously from the equation $S(v,c_G)=c(G)$.
\end{proof}

Notice that the theorem \ref{thm9} holds for the atom graph of any finite $epBA$. Therefore, applying proposition \ref{prop7}, we have that a finite quantum system $Q$ presents KS contextuality iff $S(v,c_{AG(Q)})=c(AG(Q))>\alpha(AG(Q);c_{AG(Q)})$ for all $v\in s(AG(Q))$. It supplies a parametric method, also a NC inequality, to determine if the finite quantum system has KS contextuality. A similar result was gotten with sheaf theory by Abramsky and Brandenburger \citep{Abramsky2011sheaf}. However, their expression is not parametric compared with ours, and they didn't realize the graph structure of quantum systems.

\section{Conclusion}

We exposed the graph structure of finite dimensional quantum systems by theorems \ref{thm1} and \ref{thm2} for $epBA$, which ensures that the utilization of graphs for quantum systems is reasonable. $epBA$, with the atom graphs we defined, can be used to describe the finite dimensional quantum systems and develop the theories for quantum contextuality. As an instance, a general and parametric description of KS contextuality for finite quantum systems was presented by the theorem \ref{thm9}.

In the rest of this paper, we establish the connection between atom graph and exclusivity graph by theorems \ref{thm4} and \ref{thm6}, which introduces a method to express the exclusivity experiments more precisely. The higher-dimensional (or equal-dimensional) context extension can be tools to investigate the features of quantum experiments.

\bmhead{Acknowledgments}
The work was supported by National Natural Science Foundation of China (Grant No. 12371016, 11871083) and National Key R\&D Program of China (Grant No. 2020YFE0204200).

\bibliography{sn-bibliography}

\end{document}